\newtheorem{thm}{Theorem}
\newtheorem{prop}[thm]{Proposition}
\newtheorem{lem}[thm]{Lemma}
\newtheorem{cor}[thm]{Corollary}
\theoremstyle{remark}
\newtheorem{rem}[thm]{Remark}
\newcommand{\FF}{\mathbb{F}}
\newcommand{\0}{\mathbf{0}}
\DeclareMathOperator{\wt}{wt}
\begin{document}
\title{Remark on subcodes of linear complementary dual codes}

\author{
Masaaki Harada\thanks{
Research Center for Pure and Applied Mathematics,
Graduate School of Information Sciences,
Tohoku University, Sendai 980--8579, Japan.
email: {\tt mharada@tohoku.ac.jp}.}
and 
Ken Saito\thanks{
Research Center for Pure and Applied Mathematics,
Graduate School of Information Sciences,
Tohoku University, Sendai 980--8579, Japan.
email: {\tt kensaito@ims.is.tohoku.ac.jp}.}
}


\maketitle

\begin{abstract}
We show that
any ternary Euclidean (resp.\ quaternary Hermitian)
linear complementary dual
$[n,k]$ code contains a 
Euclidean (resp.\ Hermitian)
linear complementary dual $[n,k-1]$ subcode
for $2 \le k \le n$.
As a consequence, we derive a bound on the largest minimum weights
among ternary Euclidean
linear complementary dual codes and quaternary Hermitian 
linear complementary dual codes.
\end{abstract}

\section{Introduction}\label{Sec:1}

Let $\FF_q$ denote the finite field of order $q$,
where $q$ is a prime power.
An $[n,k]$ code $C$ over $\FF_q$ is a $k$-dimensional vector 
subspace of $\FF_q^n$.
A code over $\FF_q$ $(q=2,3 \text{ and }4)$ is called {\em binary, ternary}
and {\em quaternary}, respectively.
The elements of $C$ are called {\em codewords} and the {\em weight}
$\wt(x)$ of a codeword $x$ is the number of non-zero coordinates
in $x$. The {\em minimum weight} of $C$ is defined as 
$\min\{\wt(x) \mid \0_n \ne x \in C\}$,
where $\0_n$ denotes the zero vector of length $n$.

The {\em Euclidean dual} code $C^{\perp}$ of an $[n,k]$ code $C$ 
over $\FF_q$ is defined as
$
C^{\perp}=
\{x \in \FF_q^n \mid \langle x,y\rangle_E = 0 \text{ for all } y \in C\},
$
where $\langle x,y\rangle_E = \sum_{i=1}^{n} x_i {y_i}$
for $x=(x_1,\ldots,x_n), y=(y_1,\ldots,y_n) \in \FF_q^n$.
For any element $x \in \FF_{q^2}$, the conjugation of $x$ is
defined as $\overline{x}=x^q$.
The {\em Hermitian dual} code $C^{\perp_H}$ of an $[n,k]$ code $C$ 
over $\FF_{q^2}$ is defined as
$
C^{\perp_H}=
\{x \in \FF_{q^2}^n \mid \langle x,y\rangle_H = 0 \text{ for all } y \in C\},
$
where $\langle x,y\rangle_H= \sum_{i=1}^{n} x_i \overline{y_i}$
for $x=(x_1,\ldots,x_n), y=(y_1,\ldots,y_n) \in \FF_{q^2}^n$.
A code $C$ over $\FF_q$ is called 
{\em Euclidean linear complementary dual}
if $C \cap C^\perp = \{\0_n\}$.
A code $C$ over $\FF_{q^2}$ is called 
{\em Hermitian linear complementary dual}
if $C \cap C^{\perp_H} = \{\0_n\}$.
These two families of codes are collectively called
{\em linear complementary dual} (LCD for short) codes.

LCD codes were introduced by Massey~\cite{Massey} and gave an optimum linear
coding solution for the two user binary adder channel.
Recently, much work has been done concerning LCD codes
for both theoretical and practical reasons.
In particular, Carlet, Mesnager, Tang, Qi and Pellikaan~\cite{CMTQ2}
showed that 
any code over $\FF_q$ is equivalent to some Euclidean LCD code
for $q \ge 4$ and
any code over $\FF_{q^2}$ is equivalent to some Hermitian LCD code
for $q \ge 3$.
This motivates us to study Euclidean LCD codes over $\FF_q$
$(q=2,3)$ and quaternary Hermitian LCD codes.

It is a fundamental problem to determine the largest minimum
weight among all codes in a certain class of codes for a given pair $(n,k)$.
Let $d^E_q(n,k)$ denote the largest minimum weight among
all Euclidean LCD $[n,k]$ codes over $\FF_q$.
Let $d^H_{q^2}(n,k)$ denote the largest minimum weight among
all Hermitian LCD $[n,k]$ codes over $\FF_{q^2}$.
For codes $C$ and $D$ over $\FF_q$,
we say that $C$ is {\em subcode} of $D$ if $C \subset D$.
Galvez, Kim, Lee, Roe and Won~\cite{bound} showed that
any binary Euclidean LCD $[n,k]$ code contains a Euclidean
LCD $[n,k-1]$ subcode
for odd integers $k$ with $k \ge 3$.
As a consequence, it can be easily shown that
$d^E_2(n,k) \le d^E_2(n,k-1)$ for odd $k \ge 3$.
Then Carlet, Mesnager, Tang and Qi~\cite{CMTQ} showed that
$d^E_2(n,k) \le d^E_2(n,k-1)$ for even integers $k$ with $k \ge 2$.

The main aim of this note is to establish the following
theorem.

\begin{thm}\label{thm:main}
Suppose that $2 \le k \le n$.  Then
\[
d^E_3(n,k) \le d^E_3(n,k-1) \text { and }
d^H_4(n,k) \le d^H_4(n,k-1).
\]
\end{thm}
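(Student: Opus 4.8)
The plan is to deduce Theorem~\ref{thm:main} from the stronger subcode-existence statement announced in the abstract: every ternary Euclidean (resp.\ quaternary Hermitian) LCD $[n,k]$ code with $2 \le k \le n$ contains an LCD $[n,k-1]$ subcode. Writing $d(A)$ for the minimum weight of a code $A$, the inequalities then follow at once. Let $C$ be a ternary Euclidean LCD $[n,k]$ code with $d(C)=d^E_3(n,k)$, and let $D\subseteq C$ be an LCD $[n,k-1]$ subcode supplied by the lemma. Every nonzero codeword of $D$ is a nonzero codeword of $C$, so $d(D)\ge d(C)$; since $D$ is itself an LCD $[n,k-1]$ code we obtain
\[
d^E_3(n,k)=d(C)\le d(D)\le d^E_3(n,k-1).
\]
The Hermitian case is word-for-word the same with $d^H_4$ in place of $d^E_3$. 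Hence the entire content of the theorem is the construction of the subcode.

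To construct it I would exploit the nondegeneracy of the relevant form on $C$. Write $\langle\cdot,\cdot\rangle$ for $\langle\cdot,\cdot\rangle_E$ in the ternary case and for $\langle\cdot,\cdot\rangle_H$ in the quaternary Hermitian case. Suppose first that $C$ contains a codeword $c$ with $\langle c,c\rangle\neq 0$, and set $D=\{x\in C:\langle x,c\rangle=0\}$. The map $x\mapsto\langle x,c\rangle$ is a linear functional on $C$ that does not vanish at $c$, so $\dim D=k-1$ and $C=\langle c\rangle\oplus D$. To see that $D$ is LCD, take $x\in D\cap D^{\perp}$ (resp.\ $x\in D\cap D^{\perp_H}$). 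Then $\langle x,y\rangle=0$ for all $y\in D$ because $x\in D^{\perp}$, and $\langle x,c\rangle=0$ because $x\in D$; writing an arbitrary $z\in C$ as $z=\mu c+y$ with $y\in D$ gives $\langle x,z\rangle=0$, so $x\in C^{\perp}$. As $C$ is LCD and $x\in C$, this forces $x=\0_n$. Thus $D$ is an LCD $[n,k-1]$ subcode, completing the construction.

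The one step that genuinely requires work --- and the place I expect the main obstacle --- is the existence of an \emph{anisotropic} codeword $c$ with $\langle c,c\rangle\neq 0$. I would argue by contradiction: if $\langle c,c\rangle=0$ for every $c\in C$, I claim $C$ is self-orthogonal, whence $C\subseteq C^{\perp}$ (resp.\ $C\subseteq C^{\perp_H}$), contradicting the LCD property for $k\ge 1$. In the ternary case this is pure polarization, since $\langle x+y,x+y\rangle_E=2\langle x,y\rangle_E$ and $2\neq 0$ in $\FF_3$. In the Hermitian case I would polarize with a scalar parameter $\alpha\in\FF_4$: from $\langle x+\alpha y,x+\alpha y\rangle_H=0$ one reads off $\mathrm{Tr}\!\left(\bar\alpha\langle x,y\rangle_H\right)=0$ for all $\alpha$, and nondegeneracy of the trace $\FF_4\to\FF_2$ then forces $\langle x,y\rangle_H=0$. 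These computations are routine once set up; the real subtlety is that this anisotropic step is exactly where the characteristic matters, and it is why the argument succeeds for $\FF_3$ and for the Hermitian form over $\FF_4$ but breaks down over $\FF_2$ --- precisely the binary obstruction that forced the parity restrictions recalled in the introduction.
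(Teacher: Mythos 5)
Your proposal is correct, and its skeleton matches the paper's: Theorem~\ref{thm:main} is reduced to the subcode-existence statement (the paper's Proposition~\ref{prop:main}) exactly as you do, and the subcode itself is the same object in both arguments, namely the orthogonal complement inside $C$ of an anisotropic codeword $x$ with $\langle x,x\rangle\neq 0$, which splits off as $C=\langle x\rangle\oplus D$. The two places where you genuinely diverge are both to your credit in generality, though not in economy. First, for the existence of the anisotropic codeword the paper invokes its Lemma~\ref{lem:SO} (a ternary code is Euclidean self-orthogonal iff all weights are divisible by $3$; a quaternary code is Hermitian self-orthogonal iff all weights are even), so that non-self-orthogonality of an LCD code immediately yields a codeword $x$ with $\wt(x)\not\equiv 0\pmod p$ and hence $\langle x,x\rangle\neq 0$; you instead polarize the form directly, using $2\neq 0$ in $\FF_3$ and nondegeneracy of the trace $\FF_4\to\FF_2$. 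Your route avoids the weight-divisibility lemma entirely and would work verbatim for Euclidean forms over any odd-characteristic $\FF_q$ and Hermitian forms over any $\FF_{q^2}$, whereas the paper's lemma is specific to $q=3,4$; it also isolates cleanly why the binary Euclidean case needs the parity restriction on $k$. Second, to verify that $D$ is LCD the paper passes through generator matrices: it massages a basis of $C$ into the form $x,y_2,\dots,y_k$ with $\langle x,y_i\rangle=0$, observes that $GG^*$ is then block diagonal with blocks $\langle x,x\rangle$ and $G_0G_0^*$, and applies Massey's criterion (Lemma~\ref{lem:LCD}); you instead chase an element of $D\cap D^{\perp}$ (resp.\ $D\cap D^{\perp_H}$) into $C\cap C^{\perp}$ (resp.\ $C\cap C^{\perp_H}$) using the decomposition $z=\mu c+y$, which is a coordinate-free rendering of the same computation. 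Both verifications are complete; yours needs no generator-matrix bookkeeping, while the paper's stays uniform with the matrix formalism it reuses in Proposition~\ref{prop:r1} and the subsequent remark.
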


\section{Definitions, notations and basic results}


For any element $x \in \FF_{q^2}$, the conjugation of $x$ is
defined as $\overline{x}=x^q$.
Let $A^T$ denote the transpose of a matrix $A$.
For a matrix $A=(a_{ij})$, 
the conjugate matrix of $A$ is defined as
$\overline{A}=(\overline{a_{ij}})$.
A matrix whose rows are linearly
independent and generate a code $C$ over $\FF_q$ is 
called a {\em generator matrix} of $C$. 
The following characterization is due to Massey~\cite{Massey}.

\begin{lem}\label{lem:LCD}
Let $C$ be a code over $\FF_q$ (resp.\ $\FF_{q^2}$).  
Let $G$  be a generator matrix of $C$.
Then the following properties are equivalent:
\begin{itemize}
\item[\rm (i)] $C$ is  Euclidean (resp.\ Hermitian) LCD,
\item[\rm (ii)] $C^\perp$ is  Euclidean LCD  
(resp.\ $C^{\perp_H}$ is Hermitian LCD),
\item[\rm (iii)] $G G^T$ (resp.\ $G \overline{G}^T$) is nonsingular.
\end{itemize}
\end{lem}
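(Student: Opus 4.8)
The plan is to establish the three equivalences by proving (i) $\Leftrightarrow$ (ii) through the double-dual identity, and (i) $\Leftrightarrow$ (iii) by translating the intersection $C \cap C^\perp$ (resp.\ $C \cap C^{\perp_H}$) into a left null-space condition on $GG^T$ (resp.\ $G\overline{G}^T$). I will treat the Euclidean and Hermitian cases in parallel, since they differ only by the placement of the conjugation.

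For (i) $\Leftrightarrow$ (ii), I would invoke the standard facts that the Euclidean (resp.\ Hermitian) form is nondegenerate, so that $\dim C^\perp = n - \dim C$ and $(C^\perp)^\perp = C$ (resp.\ $(C^{\perp_H})^{\perp_H} = C$; here I would use conjugate-symmetry of the Hermitian form, $\langle x,y\rangle_H = \overline{\langle y,x\rangle_H}$, to justify reflexivity). The key observation is then the trivial set identity $C^\perp \cap (C^\perp)^\perp = C^\perp \cap C = C \cap C^\perp$. Consequently $C \cap C^\perp = \{\0_n\}$ holds if and only if $C^\perp \cap (C^\perp)^\perp = \{\0_n\}$, which is precisely the assertion that $C^\perp$ is LCD; the Hermitian case is verbatim after replacing $\perp$ by $\perp_H$.

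For (i) $\Leftrightarrow$ (iii), the central step is to characterize membership in $C \cap C^\perp$ in terms of $G$. Writing a codeword as $x = vG$ with $v \in \FF_q^k$, I note that $x \in C^\perp$ exactly when $x$ is orthogonal to every row of $G$, i.e.\ $x\,G^T = \0$, which upon substitution becomes $v\,GG^T = \0$. Hence $C \cap C^\perp = \{\,vG : v\,GG^T = \0\,\}$. Because the rows of $G$ are linearly independent, $vG = \0$ forces $v = \0$; therefore a nonzero $v$ in the left null space of $GG^T$ produces a nonzero codeword $vG \in C \cap C^\perp$, and conversely every nonzero element of the intersection arises this way. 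Thus $C$ is LCD if and only if the only solution of $v\,GG^T = \0$ is $v = \0$, that is, $GG^T$ is nonsingular. The Hermitian version runs identically: the condition $x \in C^{\perp_H}$ reads $x\,\overline{G}^T = \0$, so $x = vG \in C \cap C^{\perp_H}$ if and only if $v\,G\overline{G}^T = \0$, and the same rank argument shows $C$ is Hermitian LCD if and only if $G\overline{G}^T$ is nonsingular.

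I do not expect a serious obstacle, as the statement is essentially linear algebra once the definitions are unwound; the only points demanding care are the repeated use of the full row rank of $G$ to pass freely between $v$ and $vG$, and, in the Hermitian setting, the correct placement of the conjugation so that orthogonality to all of $C$ collapses to $v\,G\overline{G}^T = \0$ rather than a mis-conjugated variant.
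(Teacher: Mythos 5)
Your proof is correct: the double-dual identity $C^\perp \cap (C^\perp)^\perp = C \cap C^\perp$ settles (i) $\Leftrightarrow$ (ii), and the translation of $x = vG \in C \cap C^\perp$ into $v\,GG^T = \0$ (with full row rank of $G$ guaranteeing $v \ne \0 \Leftrightarrow vG \ne \0$) settles (i) $\Leftrightarrow$ (iii), with the conjugation correctly placed in the Hermitian case. Note that the paper itself gives no proof of this lemma --- it is quoted as a known result of Massey --- and your argument is essentially the standard one from that source, so there is nothing to reconcile.
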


A code $C$ over $\FF_q$ is called 
{\em Euclidean self-orthogonal}
if $C \subset C^\perp$.
A code $C$ over $\FF_{q^2}$ is called 
{\em Hermitian self-orthogonal}
if $C \subset C^{\perp_H}$.
It is trivial that 
a code $C$ over $\FF_q$ is Euclidean self-orthogonal if and only if
$G G^T=O$ for a generator matrix $G$ of $C$, where $O$ is the
zero matrix.
It is also trivial that 
a code $C$ over $\FF_{q^2}$ is Hermitian self-orthogonal if and only if
$G \overline{G}^T=O$ for a generator matrix $G$ of $C$.

\begin{lem}[see {\cite[Theorem~1.4.10]{HP}}]
\label{lem:SO}
\begin{itemize}
\item[\rm (i)]
Let $C$ be a ternary code.
Then $C$ is Euclidean self-orthogonal if and only if
the weights of all codewords of $C$ are divisible by three.
\item[\rm (ii)]
Let $C$ be a quaternary code.
Then $C$ is Hermitian self-orthogonal if and only if
the weights of all codewords of $C$ are even.
\end{itemize}
\end{lem}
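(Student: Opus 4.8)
The plan is to prove both equivalences by the same mechanism: first observe that the norm $\langle x,x\rangle$ of a single codeword records the weight of $x$ modulo the characteristic, so the self-orthogonality condition $\langle x,x\rangle=0$ translates directly into the stated divisibility; then recover full orthogonality $\langle x,y\rangle=0$ from these norms by a polarization argument.

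First I would record the pointwise computation. Over $\FF_3$ one has $x_i^2=1$ for $x_i\in\{1,2\}$ and $x_i^2=0$ for $x_i=0$, so $\langle x,x\rangle_E=\sum_{i=1}^n x_i^2=\wt(x)\cdot 1$ in $\FF_3$, which vanishes exactly when $3\mid\wt(x)$. Over $\FF_4$ the conjugation is $\overline{x}=x^2$, hence $x_i\overline{x_i}=x_i^3$, which equals $1$ for $x_i\in\FF_4\setminus\{0\}$ and $0$ otherwise; thus $\langle x,x\rangle_H=\sum_{i=1}^n x_i\overline{x_i}=\wt(x)\cdot 1$ in $\FF_4$, vanishing exactly when $\wt(x)$ is even. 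This immediately yields the forward implications: if $C$ is self-orthogonal then $\langle x,x\rangle=0$ for every $x\in C$, forcing the claimed divisibility of $\wt(x)$.

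For the converse I would use polarization. In the ternary case, for $x,y\in C$ the sum $x+y$ again lies in $C$, so by hypothesis $\langle x+y,x+y\rangle_E=0$; expanding gives $\langle x,x\rangle_E+2\langle x,y\rangle_E+\langle y,y\rangle_E=0$, and since the outer two terms vanish we get $2\langle x,y\rangle_E=0$. As $2$ is invertible in $\FF_3$, this gives $\langle x,y\rangle_E=0$, so $C\subset C^\perp$.

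The hard part will be the quaternary converse, because in characteristic $2$ the plain polarization is too weak: expanding $\langle x+y,x+y\rangle_H=0$ only gives $\langle x,y\rangle_H+\overline{\langle x,y\rangle_H}=0$, i.e.\ $\langle x,y\rangle_H\in\FF_2$, not that it is zero. To finish I would feed a \emph{twisted} combination into the hypothesis: since $x+\omega y\in C$ as well, evaluate $\langle x+\omega y,\,x+\omega y\rangle_H=0$. Writing $a=\langle x,y\rangle_H$ and using $\langle y,y\rangle_H=0$, the self-conjugacy $\overline{a}=a$ just established, and the identity $\omega+\omega^2=1$ in $\FF_4$, the expansion collapses to $\omega^2 a+\omega a=(\omega+\omega^2)a=a=0$. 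Hence every pair of codewords is orthogonal and $C\subset C^{\perp_H}$, completing the proof.
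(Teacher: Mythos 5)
Your proof is correct. The paper itself gives no argument for this lemma --- it simply cites \cite[Theorem~1.4.10]{HP} --- and your polarization proof, including the extra evaluation at $x+\omega y$ needed to upgrade $\langle x,y\rangle_H\in\FF_2$ to $\langle x,y\rangle_H=0$ in characteristic $2$, is exactly the standard argument found in that reference.
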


\section{Proof of Theorem~\ref{thm:main}}

As a consequence of the following proposition, we immediately have 
Theorem~\ref{thm:main}.

\begin{prop}\label{prop:main}
Suppose that $2 \le k \le n$.
\begin{itemize}
\item[\rm (i)]
Any ternary  Euclidean LCD $[n,k]$ code contains a
Euclidean LCD $[n,k-1]$ subcode.
\item[\rm (ii)]
Any quaternary Hermitian LCD $[n,k]$ code contains a 
Hermitian LCD $[n,k-1]$ subcode.
\end{itemize} 
\end{prop}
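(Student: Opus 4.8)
The plan is to argue entirely on the level of the Gram matrix and to reduce both statements to a single fact about symmetric/Hermitian forms. Let $C$ be the given LCD code and fix a generator matrix $G$, a $k \times n$ matrix of rank $k$. By Lemma~\ref{lem:LCD}, the $k \times k$ matrix $A := GG^T$ (resp.\ $A := G\overline{G}^T$) is nonsingular, and it is symmetric (resp.\ Hermitian, i.e.\ $\overline{A}^T = A$). Thus $A$ defines a nondegenerate symmetric (resp.\ Hermitian) form $B$ on the message space $\FF_3^k$ (resp.\ $\FF_4^k$) by $B(u,v) = u A v^T$ (resp.\ $B(u,v) = u A \overline{v}^T$). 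Every $(k-1)$-dimensional subcode $D \subset C$ has the form $D = \{ vG \mid v \in H\}$ for a unique $(k-1)$-dimensional subspace $H$ of the message space; choosing a $(k-1)\times k$ matrix $M$ whose rows form a basis of $H$, the matrix $MG$ is a generator matrix of $D$ with Gram matrix $M A M^T$ (resp.\ $M A \overline{M}^T$). By Lemma~\ref{lem:LCD} again, $D$ is LCD if and only if this Gram matrix is nonsingular, i.e.\ if and only if the restriction $B|_H$ is nondegenerate. So it suffices to produce a hyperplane $H$ on which $B$ restricts to a nondegenerate form.

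For this I would use a non-isotropic vector. Suppose $w$ satisfies $B(w,w) \neq 0$, and set $H := \{ v \mid B(v,w) = 0\}$, a hyperplane since $w \neq \0$. Because $B$ is nondegenerate, the orthogonal complement of $H$ in the whole message space is the line $\langle w\rangle$, so the radical of $B|_H$ equals $H \cap \langle w\rangle$. As $B(w,w)\neq 0$ we have $w \notin H$, whence $H \cap \langle w\rangle = \{\0\}$ and $B|_H$ is nondegenerate. Thus the whole proposition reduces to exhibiting a vector $w$ with $B(w,w) \neq 0$.

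The existence of such a non-isotropic vector is the one step where the hypotheses (ternary Euclidean, quaternary Hermitian) are essential, and I expect it to be the main point of the argument. In the ternary case $B$ is symmetric and $\operatorname{char}\FF_3 = 3 \neq 2$, so the polarization identity $B(u,v) = 2^{-1}\bigl(B(u+v,u+v) - B(u,u) - B(v,v)\bigr)$ shows that if $B(w,w)=0$ for every $w$ then $B \equiv 0$, contradicting nondegeneracy; hence a non-isotropic $w$ exists. In the quaternary Hermitian case $B(w,w)$ lies in the fixed field $\FF_2$ of the conjugation, and expanding $B(u + \alpha v, u + \alpha v)$ for $\alpha \in \FF_4$ produces the cross term $\operatorname{Tr}_{\FF_4/\FF_2}(\alpha B(v,u))$; were the diagonal identically zero, nondegeneracy of the trace form would force $B(v,u) = 0$ for all $u,v$, again contradicting nondegeneracy. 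It is worth noting that this is precisely where the binary Euclidean case fails: there a nondegenerate symmetric form can be alternating (for instance the form with Gram matrix $\left(\begin{smallmatrix}0&1\\1&0\end{smallmatrix}\right)$, realized by the $[3,2]$ code $\langle 110, 011\rangle$ all of whose codewords have even weight), so no non-isotropic vector exists and the present argument genuinely requires $q=3$ or the Hermitian structure over $\FF_4$.
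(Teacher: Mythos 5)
Your proof is correct, and its overall skeleton --- find a vector on which the form does not vanish, then pass to its orthogonal complement --- matches the paper's. The two arguments diverge, however, at the key step of producing that non-isotropic vector, and they package the reduction differently. The paper stays inside the code: since $C$ is LCD it is not self-orthogonal, so by the weight-divisibility criterion (Lemma~\ref{lem:SO}) there is a codeword $x$ with $\wt(x)\not\equiv 0 \pmod p$, and for $q=3,4$ this forces $\langle x,x\rangle\ne 0$ because $a^2=1$ for all $a\in\FF_3\setminus\{0\}$ and $a\overline{a}=a^3=1$ for all $a\in\FF_4\setminus\{0\}$; it then performs explicit row operations on a generator matrix so that the remaining rows become orthogonal to $x$, the Gram matrix block-diagonalizes, and the lower block is the Gram matrix of the desired subcode. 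You instead transport everything to the message space and invoke the purely form-theoretic facts that a nondegenerate symmetric form in characteristic $\ne 2$ (via polarization) or a nondegenerate Hermitian form (via nondegeneracy of the trace) cannot be totally isotropic. Your route is slightly more abstract but strictly more general: it yields the statement for Euclidean LCD codes over any $\FF_q$ with $q$ odd and for Hermitian LCD codes over any $\FF_{q^2}$, whereas the paper's reliance on Lemma~\ref{lem:SO} ties the argument to $q=3$ and $4$ (which is all that is needed there, since for larger $q$ every code is equivalent to an LCD code). Your closing observation that the binary Euclidean case fails because a nondegenerate symmetric form over $\FF_2$ can be alternating is exactly the right diagnosis of why that case requires a separate parity argument in the literature.
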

\begin{proof}
The proofs of assertions (i) and (ii) are similar.
In order that we simultaneously give the proofs, 
we employ the terms and notations listed in Table~\ref{Tab}.

\begin{table}[thb]
\caption{Terms and notations}
\label{Tab}
\begin{center}
{\small
\begin{tabular}{c|cc}
\noalign{\hrule height0.8pt}
 & (i) & (ii) \\ 
\hline
$q$ & $3$ & $4$ \\
$p$ & $3$ & $2$ \\
LCD & Euclidean LCD & Hermitian LCD \\
self-orthogonal & Euclidean self-orthogonal & Hermitian self-orthogonal \\
$\langle x,x\rangle$ & $\langle x,x\rangle_E$ & $\langle x,x\rangle_H$ \\
$G^*$ & $G^T$ & $\overline{G}^T$\\
\noalign{\hrule height0.8pt}
\end{tabular}
}
\end{center}
\end{table}

Let $C$ be an LCD $[n,k]$ code over $\FF_q$.
Since $C$ is  LCD, $C$ is not self-orthogonal.
By Lemma~\ref{lem:SO}, 
there is a codeword $x$ of $C$ with $\wt(x) \not\equiv 0 \pmod p$.
Note that $\langle x,x\rangle \ne 0$.
We may assume without loss of generality that the generator matrix of
$C$ has one of  the following forms:
\[
G_1=
\left(\begin{array}{ccc}
&x& \\
&y_2& \\
& \vdots & \\
&y_\ell& \\
&z_{\ell+1}& \\
& \vdots & \\
&z_k& \\
\end{array}\right),
G_2=
\left(\begin{array}{ccc}
&x& \\
&y_2& \\
& \vdots & \\
&y_k& \\
\end{array}\right),
G_3=
\left(\begin{array}{ccc}
&x& \\
&z_{2}& \\
& \vdots & \\
&z_k& \\
\end{array}\right),
\]
where 
$\langle x,y_i\rangle =0$, 
$\langle x,z_i\rangle =1$ and $2 \le \ell \le k-1$.
For the matrices $G_1$ and $G_3$, we consider the following matrices:
\[
G'_1=
\left(\begin{array}{ccc}
&x& \\
&y_2& \\
& \vdots & \\
&y_\ell& \\
&z_{\ell+1}+(p-1)z_k& \\
& \vdots & \\
&z_{k-1}+(p-1)  z_k& \\
&z_k+(p-1)  \langle x,x\rangle x & 
\end{array}\right)\text{ and }
G'_3=
\left(\begin{array}{ccc}
&x& \\
&z_{2}+(p-1)  z_k& \\
& \vdots & \\
&z_{k-1}+(p-1)  z_k& \\
&z_k+(p-1) \langle x,x\rangle x & 
\end{array}\right),
\]
respectively.
Note that 
$
\langle x, z_k+(p-1)  \langle x,x\rangle x \rangle =
0
$.
Therefore,  $C$ has generator matrix $G$ of  the following form:
\[
G=
\left(\begin{array}{ccc}
&x& \\
&G_0& \\
\end{array}\right)\text{ and }
G_0=
\left(\begin{array}{ccc}
&y_2& \\
& \vdots & \\
&y_k& \\
\end{array}\right),
\]
satisfying  that
$\langle x,y_i\rangle =0$.
Then we have
\begin{align*}
GG^*
&=
\left(\begin{array}{cccc}
\langle x, x\rangle  & 0 & \cdots & 0 \\
0 & & & \\
\vdots & &
G_0G_0^* \\
0 & & &
\end{array}\right).
\end{align*}
Since $\langle x, x\rangle \ne 0$ and 
$\det GG^* \ne 0$,
it follows that
$\det G_0G_0^* \ne 0$.
By Lemma~\ref{lem:LCD}, the matrix $G_0$ is a generator matrix of an
LCD $[n,k-1]$ code $C_0$ over $\FF_q$ with $C_0 \subset C$.
\end{proof}

\section{Remarks}

As another consequence of Proposition~\ref{prop:main}, we have the following:

\begin{prop}\label{prop:r1}
Suppose that $1 \le k \le n-1$.
\begin{itemize}
\item[\rm (i)]
For any ternary Euclidean LCD $[n,k]$ code $C$,
there is a  Euclidean  LCD $[n,k+1]$ code containing $C$ as a subcode.
\item[\rm (ii)]
For any quaternary Hermitian LCD $[n,k]$ code $C$,
there is a Hermitian LCD $[n,k+1]$ code containing $C$ as a subcode.
\end{itemize}
\end{prop}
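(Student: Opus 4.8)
The plan is to deduce Proposition~\ref{prop:r1} from Proposition~\ref{prop:main} by passing to dual codes, exploiting two facts: duality reverses inclusions, and, by Lemma~\ref{lem:LCD}, it preserves the LCD property. I give the argument for assertion (i); assertion (ii) is identical after replacing the Euclidean dual $C^\perp$ by the Hermitian dual $C^{\perp_H}$ and invoking the matching parts of the lemma and of Proposition~\ref{prop:main}.

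Let $C$ be a ternary Euclidean LCD $[n,k]$ code with $1 \le k \le n-1$. First I would dispose of the boundary case $k = n-1$ directly: here the desired supercode has dimension $n$, so I take $C' = \FF_3^n$, whose generator matrix $I_n$ satisfies $I_n I_n^T = I_n$, which is nonsingular; hence $\FF_3^n$ is Euclidean LCD by Lemma~\ref{lem:LCD}, and it trivially contains $C$ as a subcode.

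Now suppose $1 \le k \le n-2$. By Lemma~\ref{lem:LCD}, the dual $C^\perp$ is a Euclidean LCD code, and it has dimension $n-k$ with $2 \le n-k \le n$, so Proposition~\ref{prop:main}(i) applies to it and yields a Euclidean LCD $[n,\,n-k-1]$ subcode $D \subset C^\perp$. I then set $C' = D^\perp$. By Lemma~\ref{lem:LCD} again, $C'$ is Euclidean LCD, and $\dim C' = n-(n-k-1) = k+1$. Finally, the inclusion $D \subset C^\perp$ reverses under the dual to give $C = (C^\perp)^\perp \subset D^\perp = C'$, using $(C^\perp)^\perp = C$. Thus $C'$ is the required Euclidean LCD $[n,k+1]$ code containing $C$.

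The argument is short once the dual viewpoint is adopted; the only point needing care is the boundary case $k = n-1$, where Proposition~\ref{prop:main} cannot be applied to $C^\perp$ (which then has dimension $1 < 2$) and must be replaced by the direct observation that the whole space $\FF_3^n$ (resp.\ $\FF_4^n$) is LCD. Beyond correctly tracking dimensions and the inclusion-reversing behaviour of the dual, I expect no genuine obstacle.
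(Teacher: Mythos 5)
Your proof is correct and takes essentially the same route as the paper: dualize, apply Proposition~\ref{prop:main} to the LCD code $C^\perp$ (resp.\ $C^{\perp_H}$), and dualize back, using Lemma~\ref{lem:LCD} to preserve the LCD property. Your separate treatment of the boundary case $k=n-1$ is a welcome refinement, since there $C^\perp$ has dimension $1$ and Proposition~\ref{prop:main} as stated does not apply --- a point the paper's own proof passes over silently.
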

\begin{proof}
The proofs of assertions (i) and (ii) are similar.
In order that we simultaneously give the proofs, 
we employ the terms and notations listed in Table~\ref{Tab}.
In addition, we denote $C^\perp$ and $C^{\perp_H}$ by
$C^*$ for (i) and (ii), respectively.
Let $C$ be an LCD $[n,k]$ code over $\FF_q$.
By Lemma~\ref{lem:LCD}, $C^*$ is LCD.
By Proposition~\ref{prop:main},
there is an LCD $[n,n-k-1]$ code $E$ with $E \subset C^*$.
Again by Lemma~\ref{lem:LCD}, $E^*$ is an LCD $[n,k+1]$ code.
Since $E \subset C^*$, 
$E^*$ is an LCD code with $C \subset E^*$.
The result follows.
\end{proof}

\begin{rem}
We give an alternative proof of the above proposition,
which is constructive.
Let the notations be as above.
In addition, let $G$ be a generator matrix of $C$.
By Lemma~\ref{lem:LCD},
$C^*$ is LCD.
Thus, $C^*$ is not self-orthogonal.
Hence, there is a nonzero vector $x$ of $C^*$ with 
$\wt(x) \not\equiv 0 \pmod p$.
Note that $x \not\in C$ and $\langle x, x\rangle \ne 0$.
Consider the following matrix:
\[
G'=
\left(\begin{array}{ccc}
&x& \\
&G& \\
\end{array}\right).
\]
Then we have
\begin{align*}
G' G'^*
=
\left(\begin{array}{cccc}
\langle x, x\rangle  & 0 & \cdots & 0 \\
0 & & & \\
\vdots & &
GG^* \\
0 & & &
\end{array}\right).
\end{align*}
Since $\det GG^* \ne 0$,
it follows that 
$\det G' G'^* \ne 0$.
By Lemma~\ref{lem:LCD}, the matrix $G'$ is a generator matrix of an
LCD $[n,k+1]$ code $C'$ over $\FF_q$ with $C \subset C'$.
\end{rem}

As described in Section~\ref{Sec:1},
any code over $\FF_q$ is equivalent to some Euclidean LCD code
for $q \ge 4$ and
any code over $\FF_{q^2}$ is equivalent to some Hermitian LCD code
for $q \ge 3$~\cite{CMTQ2}.
Hence, for any $[n,k-1]$ subcode $D$ of a 
Euclidean (resp.\ Hermitian) LCD $[n,k]$ code
over $\FF_q$ ($q >3$) (resp.\ $\FF_{q^2}$ ($q >2$)),
there is a Euclidean (resp.\ Hermitian) LCD $[n,k-1]$ code 
$E$ such that $D$ is equivalent to $E$.
By Theorem~\ref{thm:main}, we have the following:

\begin{cor}
Suppose that $2 \le k \le n$.  Then
\[
d^E_q(n,k) \le d^E_q(n,k-1) \text{ and }
d^H_{q^2}(n,k) \le d^H_{q^2}(n,k-1).
\]
\end{cor}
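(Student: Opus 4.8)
The plan is to reduce the corollary to two ingredients that are already available: Theorem~\ref{thm:main} and the equivalence theorem of Carlet, Mesnager, Tang, Qi and Pellikaan~\cite{CMTQ2} recorded just above. I would split the argument according to the underlying field. For ternary Euclidean codes and quaternary Hermitian codes the two asserted inequalities are exactly the content of Theorem~\ref{thm:main}, so there is nothing further to prove; the binary Euclidean case ($q=2$) is the prior result cited in Section~\ref{Sec:1}. The cases that genuinely require the preceding paragraph are Euclidean LCD codes over $\FF_q$ with $q \ge 4$ and Hermitian LCD codes over $\FF_{q^2}$ with $q \ge 3$, and these I would treat uniformly.

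For these remaining cases, fix $n$ and $k$ with $2 \le k \le n$ and let $C$ be a Euclidean (resp.\ Hermitian) LCD $[n,k]$ code whose minimum weight equals $d^E_q(n,k)$ (resp.\ $d^H_{q^2}(n,k)$); such a code exists by the definition of the largest minimum weight. Choosing any $(k-1)$-dimensional subspace $D$ of $C$ produces an $[n,k-1]$ subcode. Since $D \subseteq C$, every nonzero codeword of $D$ is a nonzero codeword of $C$, so the minimum weight of $D$ is at least that of $C$. Now I would invoke \cite{CMTQ2}: because $q \ge 4$ (resp.\ $q \ge 3$), there is a Euclidean (resp.\ Hermitian) LCD $[n,k-1]$ code $E$ that is equivalent to $D$. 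Code equivalence here is monomial equivalence, which preserves length, dimension and minimum weight, so $E$ has the same minimum weight as $D$, hence at least $d^E_q(n,k)$ (resp.\ $d^H_{q^2}(n,k)$). As $d^E_q(n,k-1)$ (resp.\ $d^H_{q^2}(n,k-1)$) is by definition the maximum of the minimum weights over all such codes, it is at least the minimum weight of $E$, and the claimed inequality follows; the Hermitian case is identical.

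The step I expect to require the most care is the branching on $q$ itself, together with understanding why it cannot be avoided. For $q \ge 4$ a codimension-one subcode $D$ of an LCD code is in general \emph{not} LCD, so one cannot conclude directly as in Proposition~\ref{prop:main}; this is precisely the gap that the equivalence theorem of \cite{CMTQ2} closes, by replacing $D$ with a monomially equivalent LCD code of the same minimum weight. For ternary Euclidean and quaternary Hermitian codes that equivalence theorem is not available, and one genuinely needs the constructive LCD subcode produced in Proposition~\ref{prop:main}, already packaged as Theorem~\ref{thm:main}. The only routine point to verify is that the equivalence being invoked is weight-preserving, which holds since the relevant notion of equivalence is monomial equivalence.
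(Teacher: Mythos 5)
Your proposal is correct and follows essentially the same route as the paper: the ternary Euclidean and quaternary Hermitian cases come from Theorem~\ref{thm:main}, the binary Euclidean case from the prior results cited in Section~\ref{Sec:1}, and the remaining cases from the weight-preserving equivalence of an arbitrary $[n,k-1]$ subcode to an LCD code via~\cite{CMTQ2}, exactly as in the paragraph preceding the corollary. Your explicit remark that monomial equivalence preserves minimum weight makes the argument slightly more self-contained than the paper's, but the substance is identical.
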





\begin{thebibliography}{99}

\bibitem{CMTQ}
C. Carlet, S. Mesnager, C. Tang and Y. Qi, 
New characterization and parametrization of LCD codes,
{\sl IEEE Trans.\ Inform.\ Theory}
{\bf 65} (2019), 39--49.

\bibitem{CMTQ2}
C. Carlet, S. Mesnager, C. Tang, Y. Qi and R. Pellikaan,
Linear codes over $\FF_q$ are equivalent to LCD codes for $q >3$,
{\sl IEEE\ Trans.\ Inform.\ Theory}
{\bf 64}  (2018),  3010--3017.

\bibitem{bound}
L. Galvez, J.-L. Kim, N. Lee, Y.G. Roe and B.-S. Won,
Some bounds on binary LCD codes,
{\sl Cryptogr.\ Commun.}
{\bf 10} (2018), 719--728. 

\bibitem{HP}W.C. Huffman and V. Pless, 
Fundamentals of Error-Correcting Codes,
Cambridge University Press, Cambridge, 2003.

%

\bibitem{Massey}J.L. Massey, 
Linear codes with complementary duals,
{\sl Discrete Math.}
{\bf 106/107} (1992), 337--342.


\end{thebibliography}
\end{document}